\theoremstyle{plain}
\newtheorem{theo}{Theorem}[section]
\newtheorem{lem}[theo]{Lemma}
\newtheorem{prop}[theo]{Proposition}
\newtheorem{cor}[theo]{Corollary}
\newtheorem{theorem}[theo]{Theorem}
\theoremstyle{definition}
\newtheorem{definition}[theo]{Definition}
\newcommand{\beq}{\begin{equation}}
\newcommand{\eeq}{\end{equation}}
\renewcommand{\d}{\delta}
\newcommand{\g}{\gamma}
\renewcommand{\r}{\rho}
\newcommand{\s}{\sigma}
\newcommand{\D}{\Delta}
\newcommand{\bC}{\mathbb{C}}
\newcommand{\bR}{\mathbb{R}}
\newcommand{\bZ}{\mathbb{Z}}
\newcommand\SL{\mathrm{SL}}
\newcommand\SO{\mathrm{SO}}
\newcommand{\cD}{\mathscr{D}}
\newcommand{\cF}{\mathscr{F}}
\newcommand{\cU}{\mathscr{U}}
\newcommand{\Sim}{\operatorname{Sim}}
\newcommand{\CO}{\operatorname{CO}}
\newcommand{\Gabor}{{\mathcal G}\!\!\operatorname{\it ab}}
\renewcommand{\=}{{:=}}
\newcommand{\Sem}{S^2_-}
\newcommand{\Nem}{S^2_+}
\newcommand{\Np}{{\bf N}}
\newcommand{\Sp}{{\bf S}}
\newcommand{\p}{\partial}
\renewcommand{\square}{\kern1pt\vbox
{\hrule height 0.6pt\hbox{\vrule width 0.6pt\hskip 3pt
\vbox{\vskip 6pt}\hskip 3pt\vrule width 0.6pt}\hrule height0.6pt}\kern1pt}
\DeclareMathOperator\Tr{Tr\;}
\DeclareMathOperator\vol{vol}
\DeclareMathOperator\Id{Id}
\renewcommand\Re{\operatorname{Re}}
\renewcommand\Im{\operatorname{Im}}
\renewcommand\={:=}
\newcommand{\wt}{\widetilde}
\newcommand{\wh}{\widehat}
\newcommand{\bt}{\begin{theo}\ \ }
\newcommand{\et}{\end{theo}}
\newcommand{\bp}{\begin{prop}\ \ }
\newcommand{\ep}{\end{prop}}
\newcommand{\bc}{\begin{cor}\ \ }
\newcommand{\ec}{\end{cor}}
\newcommand{\bl}{\begin{lem}\ \ }
\newcommand{\el}{\end{lem}}
\newcommand{\bd}{\begin{definition}}
\newcommand{\ed}{\end{definition}}
\newcommand{\be}{\begin{equation}}
\newcommand{\ee}{\end{equation}}
\def\<#1,#2>{\langle\,#1,\,#2\,\rangle}
\newcommand{\arr}{\begin{array}{rlll}}
\newcommand{\ea}{\end{array}}
\newcommand{\bea}{\begin{eqnarray}}
\newcommand{\eea}{\end{eqnarray}}
\newcommand{\bean}{\begin{eqnarray*}}
\newcommand{\eean}{\end{eqnarray*}}
\newcommand{\ve}{\varepsilon}
\newcommand{\ster}{\operatorname{st}}
\def\sideremark#1{\ifvmode\leavevmode\fi\vadjust{
\vbox to0pt{\hbox to 0pt{\hskip\hsize\hskip1em
\vbox{\hsize3cm\tiny\raggedright\pretolerance10000
\noindent #1\hfill}\hss}\vbox to8pt{\vfil}\vss}}}
\begin{document}

\title[Conformal models for   hypercolumns]{Conformal models for   hypercolumns \\ in  the primary visual cortex V1   
}



\author[D. V. Alekseevsky and A. Spiro]{Dmitri V. Alekseevsky
 \& Andrea Spiro}

\subjclass[2000]{92B99, 68T45 } 
\keywords{Hypercolumn of the V1 cortex;  
Conformal M\"obius group;   Neurogeometry of the early vision.}

\maketitle
\vskip - 2.5 cm 
\begin{abstract}   We propose  a  differential geometric  model of   hypercolumns  in the primary visual cortex V1 that  combines  features of the   symplectic  model  of  the primary  visual cortex   by   A. Sarti, G. Citti and J. Petitot  and of   the  spherical model  of   hypercolumns by  P. Bressloff and J. Cowan.  The  model  is based on   classical   results   in  Conformal  Geometry. 
%
\end{abstract}

\section*{Introduction}
\label{intro}
 The main purpose of this paper is  to  present     a new differential geometric model for   hypercolumns that   combines  elements of  the  symplectic  model  for  the primary  V1 visual cortex   of   A. Sarti, G. Citti and J. Petitot    and  of the     spherical model  for the  hypercolumns of  P. Bressloff and J. Cowan. We recall that in  Bressloff and Cowan's model  the  simple   cells of a hypercolumn    are   parametrised  by the points of a sphere   $S^2$ with its standard  Riemannian metric $g_o$.  In our model we    consider $S^2$  equipped with the conformal structure  $[g_o]$ determined by  the standard metric and, 
 inspired by   Cartan's approach to conformal geometry,  we   assume that the  simple cells  of a   hypercolumn   are parameterised     by  the points of the    M\"obius  group of conformal transformations of $S^2$. This assumption is similar to  what is considered in     Sarti,  Citti and Petitot's symplectic  model of the V1 cortex,  according to which the simple cells  are parameterised  by  the points of the  conformal  group  of the plane $\bR^2$.  In fact, in the final section of this paper, we   show  that  our model  for hypercolumns  reduces to  (a formal version of)  the  Sarti, Citti and Petitot's model  in sufficiently small neighbourhoods of  pinwheels.
\par
 \smallskip 
  Before going into  the details of our model,  let us first briefly review some   known facts  on the     V1 cortex and its differential geometric  models.
We know that 
 the  firing of a  simple  neuron depends on the value $I(z)$ of   the input energy function $I$  (that is,  the density of the energy   of   the incident light that hits the retina) at   the receptive  field of the neuron,  representable as    a point  $z$ of the retina.  
D. Hubel  and  T. Wiesel    found that  the  firing of a  visual neuron  depend not only on  $I(z)$, but  also  
  on several other data, called {\it internal parameters} (as, for instance,   the {\it orientation} or the  {\it  spatial frequency} of $I$ at a point of a  contour). 
  They also introduced   the important notion of  {\it  hypercolumn}  as {\it a  minimal collection of columns of the V1 cortex  containing    neurons  whose firing depend on any     value of  the internal parameters}.   \par
As it was remarked by  various authors,  
the existence of internal parameters for the neurons of the V1 cortex  can be formulated  in mathematical language by saying that  the   V1  cortex  may be considered as  a  fiber bundle over the  retina and that the  several  internal parameters   (orientation, spatial  frequency, ocular  dominance,  direction  of  motion, curvature, parameters of  the color  space, etc.),   which  are known to affect the firing of  the visual neurons,  may be considered as   fiber  coordinates for such a bundle
 (see e.g.  \cite{H}; see also \cite{P,P-T,P1,S-C-P}).     N.V. Swindale \cite{S} estimated   the   dimension  of  the fibers of this bundle  (i.e. the number  of  internal parameters) as $6-7$ or $9-10$.\par
  \smallskip
 According  to  Hubel  and Wiesel's results,  the simple   neurons of the  V1  cortex  detect   {\it contours},   that is   the  level  sets  with  large gradient  of the   input   function  $I$ on  retina  $R$ (= the  energy density of the  light  which hits the retina).
 Motivated by the  experimental results  on the structures of the orientation maps,   determined  using Bonh\"offer and Grinvald's revolutionary   techniques  \cite{B-G}, and inspired   by  Hubel and Wiesel's ideas   and the   pioneering     work   by W. C. Hoffman \cite{Hof}, 
 J. Petitot   and  Y. Tondut   and   then  J. Petitot himself in several subsequent papers  and books developed  in great detail    a contact  model for the  V1  cortex \cite{P-T, P}.  In   that  model   the retina $R$  is  mathematically represented as (a region of)  the  Euclidean plane $\bR^2$  and  the   V1 cortex   is  described  as the  total space of the   projectivised tangent  bundle  $\pi: PTR  \to  R$ over  the retina
(\footnote{Note that  Petitot's  model can be easily adjusted into a slightly  more realistic one,  in which the  retina  $R$ is not  represented by a region of the  Euclidean plane $\bR^2$, but  by a region of the  2-sphere $S^2$.}).  This  bundle  admits a natural system of coordinates   $(x,y, \theta)$, in which 
   $(x,y)$ are coordinates for  the points  of the  retina $R = \bR^2$ and   $\theta  \in [-\pi/2$, $\pi/2)$ is the so-called {\it  orientation}, i.e. the  angle   between a line  $\bR v$  of   the  tangent  space $T_{(x,y)} R$    and   the  axis  $0x$. Notice  that   $PTR$  may  be  also interpreted   as the   space of the {\it infinitesimal curves} (or,  more precisely,   the  {\it 1-jets of  the non-parameterised    curves}) {\it of  $R = \bR^2$} and that it  is equipped with   the  canonical contact  1-form $\eta =  dy - \tan \theta\,  dx$. \par
 \smallskip
 We recall that, from a physiological point of view,    the  V1  cortex consists  of  {\it columns},  which  are divided into  {\it regular}  and   {\it singular}.  Each  column  contains  approximately  $80$-$100$ visual neurons, $25\%$ of them  simple ones.  The simple neurons of each column have  almost  the  same  receptive  field  (which we may think  of as a point $z $ of the retina $R$) and  are modelled by   Gabor  filters.  The firing  of a simple neuron, considered as a Gabor filter,    depends 
 not only on its receptive field (RF) $z$, but also on   some 
internal parameters,   first of all  the   {\it orientation} of the contour passing through the RF $z$.  The behaviours of the simple cells of a regular column and  of  the simple cells of a singular column differ by the following aspect.  All Gabor filters of the 
 simple neurons of a {\it regular} column   have the same   orientation $\theta$ (up to an error of  15\%).   This   means that  all  simple neurons of such a regular column  fire  only when  a contour     through  their RF  has    such a   common orientation $\theta$. In contrast with this, a  {\it singular column} (also called  {\it pinwheel}) contains simple cells that are able to  detect contours with any orientation.\par
        \smallskip
 In Petitot's model all columns of the V1 cortex are assumed to be  pinwheels and two pinwheels are   distinguished one from the other  by their   RFs $z = (x,y) \in R \subset \bR^2$ (for a given pinwheel the RFs of its  simple cells are essentially all the same and can be represented by just one point of the retina).  The   simple cells in a fixed  pinwheel  are distinguished   one from the other by   their orientations $\theta \in [-\frac{\pi}{2}, \frac{\pi}{2})$.  In this way the V1 cortex is mathematically represented by the (trivial) bundle $P = R \times S^1 \to R$ with coordinates  $(x, y, \theta)$. 
      \par 
 \smallskip
    Petitot's  contact  model was later  extended   into  the  so-called {\it symplectic model}  in a paper by Sarti, Citti  and Petitot \cite{S-C-P}. In  this second model the simple cells of the V1 cortex are described in terms  of a bundle {\it with  two-dimensional  fibers}. The internal parameters of this model (= the coordinates of the fibers)  are  the orientation  $\theta \in  [-\frac{\pi}{2}, \frac{\pi}{2})$  and a    {\it scaling  factor} $\sigma$ that   describes  the intensity   of    response of a neuron to a  stimulus.
 In mathematical language,  the  simple cells of the V1  cortex  correspond to  the points of the trivial principal bundle  $\pi: P \simeq R \times \CO(2) \to R $ of  the conformal  frames  of  the  retina $R = \bR^2$, with the structure group  $\CO(2) = \bR^+ \times \SO_2 \simeq \bC^*$.      We call this  conformal frame bundle the {\it Sarti-Citti-Petitot (SCP) bundle}.\par
     \smallskip
We think that this very nice model of  Sarti, Citti and Petitot has nevertheless   two weak points.  
 First,   in such a   model,  each  simple cell  is mathematically  represented as a   point of a fiber bundle with two dimensional fiber and each  column is represented  by a  two dimensional fiber, 
 parameterised by   $\theta$ and   $\s$. This means that {\it  the authors consider only 
singular columns}  (for a    regular column,  the  orientation is a fixed number and is not an internal parameter),  {\it while it is  known that in the V1 cortex
most columns are   regular}. Second,   in \cite{S-C-P},  the  authors propose an interpretation of the scaling factor $\s$   in terms of the distance between the    RF  of  a neuron  (which is assumed to be activated through the  so-called {\it maximal selectivity process})   and the regular  boundary of a  retinal figure.  
Such a property has a non-local character and it  therefore  does not fit with  the idea of Hubel and Wiesel that the internal parameters should  correspond to local properties of the input function.
  \par
\smallskip
As we will shortly explain,  the model for  hypercolumns that we propose in this paper yields  a variant of the Sarti,  Citti and Petitot's model for the V1 cortex and   suggests an  alternative physiological interpretations for  the scaling parameter  $\s$   {\it of purely local character} (our model leads naturally to an  identification of $\s$ as the  normalised   spatial frequency) and treat singular and regular columns on the same footing. 
We believe  these two features provide  remedies for  the above   issues.\par
\smallskip
The  starting point for the construction of our model is the observation  that in    both   Petitot's contact model  and  Sarti, Citti and Petitot's   symplectic  model there is the following common assumption: 
Each   simple cell   of the V1 cortex acts  as one of the   Gabor  filters that can be  obtained  from a   {\it mother  Gabor  filter} 
 through the   actions  of the elements of a Lie group $G$ of  transformations (for the contact model,  the group $G$ is the group of isometries of $\bR^2$, for  the symplectic model  $G$ is the group of conformal transformations of $ \bR^2$).  This led  us to   formulate  the following principle, which seem to have been implicitly 
 followed by  Petitot, Sarti and Citti in the formulations  of their models. 
 \begin{itemize}[leftmargin = 10pt]
\item[] {\bf Principle    of Homogeneity}.  {\it A system of visual neurons of the V1 cortex, acting  with  identical  physiological properties, 
can be mathematically  modelled as  a family of  Gabor filters,  that are all obtained  from  a single  {\rm mother  Gabor filter}    by changing the original  {\rm mother profile}  into new profiles via  the  actions  on densities of the  transformations  of  a given  Lie group  (or  pseudogroup)   of  diffeomorphisms. }
\end{itemize}
This principle  guided   us in the  construction of   our  {\it conformal model  for hypercolumns},  which  we shortly present  and which    can be considered as a very  natural modification 
 of  the model 
%
%
  proposed  by   P.\ Bressloff   and  J.\ Cowan  in  \cite{B-C,B-C1,B-C2}. Before getting into the details of our model, let us  briefly review  Bressloff   and Cowan's model. \par
According to Bressloff and Cowan, each   hypercolumn $H$ of the V1 cortex  is  associated with two pinwheels, say $\Np$ and $\Sp$. The simple neurons of $H$ are determined by two parameters, the orientation $\theta \in [0,2\pi]$ and the (normalised logarithm of the)  spatial frequency $\phi \in [0, \pi]$. Due to this,  $H$ is mathematically  identifiable  with the unit sphere $S^2$, where  $\theta$ and $\phi$ are  the  longitude and the ($\frac{\pi}{2}$-shifted) latitude.  Under this identification, 
 the  pinwheels $\Np$, $\Sp$  correspond to the north and south poles of the sphere, i.e. the two points of the sphere where   the orientation $\theta$ (= longitude)  is not defined  and   $\phi$ takes its maximum or minimum value.
In our conformal model, we  replace the unit sphere $S^2$ by  the  
total space of  the  bundle 
$\pi: P^{(1)} \to S^2$
of the   (second order) conformal frames of $S^2$, obtaining in this way   an extension  of the original Bressloff-Cowan model.  We recall that the bundle $P^{(1)}$  is in turn identifiable with the following homogeneous bundle (see e.g. \cite{Ko}): 
$$\pi: P^{(1)} = \SL_2(\bC) \longrightarrow \SL_2(\bC)/\Sim(\bR^2) = S^2$$ 
 We now remark that 
in a small neighbourhood  of   one of the  two pinwheels, say $\Np$,  our proposed conformal model can  be simplified and transformed into a  second new  model, in which the neurons are parameterised by a smaller number of coordinates. We call it the   {\it reduced model}. According to such a simplified model,  the number of internal parameters is reduced from $4$ to $2$ and the simple neurons of the considered neighbourhood are parameterised by the points of  the total space $\Sim(\bR^2)$ of the SCP bundle $\pi: \Sim(\bR^2)  \to \bR^2$, as it occurs in  the symplectic model of the V1 cortex.  In other words,   {\it the bundle of our conformal model  is  reduced to   a modification of an SCP bundle}. \par
 \smallskip
In the reduced model, the two internal  parameters  for the fibers over  the points   $\Np$, $\Sp$ of the sphere representing the  hypercolumn  
admit an interpretation in terms of   Bressloff and Cowan's  parameters $(\theta, \phi)$. This leads to  an interpretation  of   Sarti, Citti and Petitot's  scaling parameter $\s$  in terms of the  (normalised  logarithm  of the)  spatial frequency $\phi$ considered in Bressloff and Cowan's model, one of the most important internal parameters.  At the same time,  our proposed conformal model offers a mathematical presentation   for the firing of  {\it any}  simple cell of a hypercolumn, regardless whether it belongs to a regular column or to a pinwheel. As we have mentioned above, these are two features that we think  overcome  the above discussed weak points of the Sarti, Citti and Petitot's model.  \par
\smallskip
Summing up,  the problems, which our model addresses, and the solutions, which it offers and from which it  gets its relevance, can be listed as follows: 
\begin{itemize}[leftmargin = 20pt]
\item[(i)] We think that the Sarti-Citti-Petitot modelling space describes the   responses of the V1 cortex as if it were mainly determined  by just the  cells of the  pinwheels. In particular    it does not  clarifies
 the role played by the   regular   columns.  In our conformal model, the  mathematical representation of hypercolumns  can be considered as a combination   of  (a slightly  generalisation of) Bressloff and Cowen's spherical   model    with
  Sarti, Citti and Petitot's model and deals  with  both kinds of  simple  cells (those of the   regular columns and those of the pinwheels)   treating them on the same footing. Since our model \ reduces  to a  modification of   Sarti, Citti and  Petitot's model 
 in  small  neighbourhoods of   pinwheels, we expect  that it might  be  useful to  determine a solid foundation for   a  V1 cortex symplectic model on a large scale. 
 \item[(ii)] The  simplification   of our model  into the reduced  model for the  neighbourhoods of pinwheels makes manifest a possible  origin for  the two internal parameters  of Sarti, Citti and Petitot's model
 and hence  a  physiological interpretation for the  scaling parameter  as an internal parameter, as defined by  Hubel and Wiesel. 
 \item[(iii)] It is  known  that   the   firing   of  the visual neurons in general depend  on many  internal parameters. Recently,  new mathematical  models,      involving  more then  two internal parameters,  have been   introduced  
 (see  for instance  \cite{B-S-C}).  Our  conformal model is one  of such models: Indeed,  our proposed  modelling space is a bundle with  six internal parameters.   It also   very naturally  originates   from  E. Cartan's  approach to  $G$-structures and  to  Conformal Geometry.  We think that these aspects combined with   the  essential role of the conformal group in the remapping problem, as  indicated by Hoffmann \cite{Hof} (see also \cite{A, A-S}), and   its tight relations with   Sarti, Citti and Petitot's  and   Bressloff and Cowan's  models, make our   conformal  model  particularly stimulating. 
 \end{itemize}
\par
    \smallskip
 As a final remark, we would like to point out that,  in   Petitot's contact model,  in Sarti, Citti and Petitot's symplectic model and in our proposed conformal model, the V1 cortex  or a  hypercolumn are  identified with   a principal    bundle   over the retina, whose points can be in turn identifiable   with certain  linear   frames  for the tangent spaces of  a manifold.   More precisely,   in the contact model,  the V1 cortex is identified with  a bundle $P$ with structure group   $G = \SO_2$, which in turn can be  identified   (up to a covering) with  the bundle of the  {\it orthonormal  oriented   frames}   of $R = \bR^2$. In the  symplectic  model,  the  V1 cortex is identified with  a bundle $P$ with structure group   $G = \SO_2 {\cdot} \bR^2$, which can be considered as  the  bundle   of  the  {\it conformal oriented frames} of $R  = \bR^2$.   In our conformal model, a hypercolumn  $H$ is identified with  a principal bundle $P$, which   can be considered as  the bundle of the {\it conformal linear frames} of  the tangent spaces of Bressloff and Cowan's  modelling sphere $S^2$.   We recall that the  principal bundles of linear frames  for the tangent spaces of  manifolds are  the so-called {\it $G$-structures},  and that the  theory of $G$-structures --  introduced and developed  by   S.-S. Chern,  I.\ M.\ Singer,  V.\ Guillemin, S. Sternberg  et al. in the '60 and '70 as an exact  coordinate-free formulation of  classical E. Cartan's method of ``moving frames''   (see e.g.\cite{Ko,St} for excellent introductions) --   is  nowadays  widely used  in Differential Geometry and Physics. The above observations indicates that the  theory of $G$-structures might have  fruitful applications   in neurogeometry as well. Further investigations on this regard would be     very interesting. 
 \par
\medskip

\section{Hoffman's  pioneering model   of  the   V1 cortex} \label{hoffman}
 The  first  attempt  to    develop  a  differential geometric model for  the  primary visual cortex  V1    appeared  in a pioneering  and  very stimulating  paper  by W.  Hoffman  \cite{Hof}. There, among other  important ideas,  for the first time  the crucial  role of the M\"obius conformal group  $\SO_{1,3}$  in descriptions of  the functional structure of the primary visual cortex was pointed out. 
 \par
  Hoffman  proposed a model,  in which       the
 V1  cortex  is  mathematically represented   as  the  $3$-dimensional  total space  $V = \mathcal{C} R$    of a  fiber bundle
  $$  \pi : V = \mathcal{C} R \to  R$$
  over  the  retina  $R$,   whose   fibers  model   the columns  of   the  V1 cortex.  According to this model, the simple neurons of   a column  form an
   {\it orientation response field}  (ORF) and    the collection of all such ORFs 
   determine  lifts of  retinal  contours to  curves in  $V$.  The  tangent  vectors  to  these lifts   define
     a contact structure     on  $V$. 
In Hoffman's words, 
\begin{itemize}[leftmargin = 5pt]
\item[] ``{\it the thing one first thinks of is that the visual contours are integral curves
of the cortical vector field embodied in the ORFs. In other words, the visual
map is a tangent bundle  $  T\mathcal{C}R = V  \to   R$, where    $ R$ denotes the retinal
manifold and V the cortical “manifold of perceptual consciousness.” But the
ORFs unfortunately do line up head to tail as in an Euler line approximation
to an integral curve. Furthermore, the ORFs have an areal character as well
as a line-element one.}''
\end{itemize}
\par
\smallskip
Unfortunately,   in   \cite{H}  diverse mathematical errors and inaccuracies occurred. For instance, it was   stated that   the  V1 cortex  should be identified with    the tangent bundle  $T { R}$ or   cotangent  bundle $T^* { R}$ of the retina,  while on the contrary it is now known that it should be identified with  the projectivisations of these bundles.  It   is also claimed  that the  conformal group $\SO_{1,3}$  acts transitively on the  contact bundle, but this is not true. 
Other   faults of this kind  appeared.  \par
However, despite of these problems, there is no doubt that Hoffmann's   ideas promoted and strongly  influenced  all  subsequent developments  of differential  geometric models for  the visual system.   Following a suggestion of J. Petitot, the area of applied geometry which originated from Hoffman's work  is nowadays  called {\it neurogeometry}.\par
\medskip
\section{Petitot's  contact model of the V1 cortex}\label{petitot}
  A   fully correct mathematical    formulation of Hoffman's  ideas   was  given  by    J. Petitot  and Y. Tondut in \cite{P-T} and by J. Petitot  in \cite{P}. 
 The  starting point  was to consider  the V1  cortex  as  a  bundle over  the  retina  with  one dimensional fiber, parameterised by   the  most important internal parameter, the  {\it  orientation}.
  More precisely,    Petitot  and Tondut identify  the  retina  $R$   with the Euclidean plane $R = \bR^2$  with Euclidean coordinates $(x,y)$
 and propose  to represent  the V1  cortex   by  the projectivised cotangent   bundle
 $$\pi:   PT^*(\bR^2)= S^1 \times \bR^2  \longrightarrow  R=\bR^2\ .$$
  In this bundle   the  fiber $ S_z^1 =\pi^{-1}(z)$ at a point   $z=(x,y)$  is  the projective line  (that is, a circle) $P T_z(\bR^2)\simeq  \bR P^1  = S^1$, endowed  with the  natural coordinate   $ \theta  \in [0, \pi) $  (= the orientation) given by  the    angle between each line  $[v] \in PT_z(\bR^2)$   and the  $x$-axis.
  In   the  model, each   fiber  $\pi^{-1}(z)$ of $P T(\bR^2)$   corresponds to  a   pinwheel with RF $z$ and
   the simple neurons  of such pinwheel  are parametrised by  their  orientations $\theta$.
  Since the  standard   Riemannian metric  of the  sphere   is conformally flat  and  all   main constructions  are  conformally invariant,   Petitot 's model can be easily generalised to a setting   in which  the  retina $R$ is identified  with  (an open domain of) a  sphere.
  \par
  \smallskip
  Let  us denote by   $I(z) =  I(x,y)$ the  {\it input  function of the  retina},  that is,  roughly speaking,   the density function of  the  energy of   the light, which is incident to the  retina.  The {\it  retinal contours}   are  the  level sets of the form $C = \{I(x,y)= \text{const} \}$ on which   the gradient of  $I$  is large. According to Hubel and Wiesel,  they are  the  main objects   of perception in  the early vision.\par
  A retinal   contour $C$ is a   non-parameterised  curve, but   one can  always  locally     select   a   parametrisation  $z(t) = (x(t), y(t))$, $t \in (a,b)$  with  $\dot{x}(t) \neq 0$   and then   consider  $x$ as a new parameter so that     the  contour $C$  can be   represented as  a curve  $z(x)=(x, y(x))$.
  The orientation  $\theta(x) \in [- \pi/2, \pi/2) $ of the curve $ z(x)$  at  a point $o=(x_o,y_o))$   is  the  angle  between the  coordinate line  $x$ and the  velocity  $d z(x)/dx$ at that point.      This implies that  the  curve  $ z(x)=  (x,y(x))) $  is a solution to   the  ODE
     \begin{equation}\label{ODE}
     \frac{dy(x)}{dx} = p(x):=\tan \theta(x).
     \end{equation}
       Given a  pinwheel  $S^1_{z_o} = \pi^{-1}(z_o) = P T(\bR^2)|_{z_o} $ with RF in $z_o$,  the simple neuron  in  $S^1_{z_o}$  with  orientation $\theta_o$   fires   when  there is a   contour $C= z(x)$ passing  through $z_o = (x_o, y_o)$ and  with  orientation  $ \theta(x_o) = \theta_o$  at such a   point.  It follows that  the  curve $c(x) =  (x,y(x), \theta(x))$ in $P T^* \bR^2$, made  of    the fired   neurons,  is a lift   to  the V1 cortex $PT^* \bR^2$ of the  contour $z(x) = (x, y(x))$. \par
       \smallskip
        The   lifted  curve $ c(x) = (x,y(x), \theta(x))$ is  {\it horizontal} in the sense that, at each of its points,   it is tangent to the canonical  contact distribution  $\cD_H \subset T (P T^* \bR^2)$ of $P T^* \bR^2$. Indeed, if we denote by $  \lambda_H  = d \theta - \tan \theta dx$  the canonical contact form of $P T\bR^2$ and by $\cD_H = \ker \lambda_A$ the corresponding contact distribution, at any point of  the lifted curve $c(x)$  the   velocity is
       \begin{equation} \label{111}  \dot c(x) =\p_x + \frac{dy}{dx}\p_y + \dot \theta(x)\p_{\theta} \in T (P T\bR^2) \simeq  T H\ ,\end{equation}
           which implies that
       $$ \lambda_H (\dot c(x)) = (dy -  \tan \theta dx)(\dot c(x))= \frac{dy}{dx}-p(x) =0\ \qquad \Longrightarrow\qquad \dot c \in \cD_H|_{c(x)}\ .$$
 Such  horizontal lifts to   $P T^* \bR^2$ of  curves  on $\bR^2$ are the so-called  
 {\it Legendrian lifts}. Hence, according to Petitot and Tondut,  the  V1  cortex  realises the  Legendrian  lifts of the contours  in  the retina  and the  aim of the visual system  is   to  integrate the   infinitesimal information encoded in  the  firing of the  simple neurons 
 (i.e. to integrate the   ODE \eqref{111})   and globally  recover  the   contours.\par
\smallskip
We stress the fact that the  most important retinal contours  are  the  {\it closed}  ones,    corresponding  to  the boundaries of the  retinal images  of   three dimensional  objects. A closed contour   divides the  retinal plane into two parts, one of which is the retinal  image of  an external object.  Deciding which of the two parts of the plane is the ``image''  is equivalent  to fix an orientation of the  closed contour $C$ (it is the orientation, according to which the image is  to the left of $C$ with respect to direction of such orientation).  A very important  problem  is to   understand  under which mechanism  the   visual system  chooses the  orientations of the closed contours, i.e.    which   of  the two  regions  is the image of  a three dimensional object.\par
 \medskip
\section{Sarti, Citti and Petitot's symplectic  model}\label{sarticittipetitot}
In  \cite{S-C-P},  Sarti, Citti and Petitot proposed   a symplectisation of  Petitot's  contact model.
They  assumed  that   each    simple  cell  of the V1 cortex  is  characterised  not only by an orientation  $\theta$ and
   by  the  point  $z \in R$ of the retina, corresponding to  the {\it receptive field}  (RF) of the simple cell, but also by a new  parameter $\s$,  called
{\it scaling}. It represents  the intensity of    the  reply  to a   stimulus.
    This assumption   leads to  the extension of contact  model  into   the so-called  {\it symplectic  model}, described as follows.
According to this  model,
 the  V1 cortex is  represented by  the principle $\bC^*$-bundle
 over   the  retina $R = \bR^2$
$$\pi : P = \bC^* \times R  \longrightarrow R\ ,$$
 where $\bC^* = S^1 \times \bR^+ = \{\s e^{i \theta}  \}$ is
the   group of non zero  complex numbers.
The  $\bC^*$-bundle $P$ is  also identified with the  group $P = G$ of similarities of $\bR^2$
 $$  G:= \Sim(\bR^2) =\left(\bR^+ {\cdot} \SO_2\right) \ltimes T_{\bR^2}  = \bC^* \ltimes \bR^2\ ,$$
  where  $T_{\bR^2} = T_{\bC}$ denotes the group of  the parallel translations of the plane. The same bundle   can be  also  identified    with the  cotangent  bundle  of $R$  with the   zero section removed  $T^*_{\sharp} R$.
 The  manifold    $P  =  T^*_{\sharp}R $   has a natural  symplectic structure, given  by the non-degenerate closed $2$-form
  $\omega = d \lambda$ determined by   the Liouville form $\lambda$.
    We  call    the symplectic bundle $\pi: P  = G =  T_{\sharp}^* R \to R$   the {\it Sarti-Citti-Petitot (SCP) bundle}.\par
  \smallskip
The  correspondence  between    visual neurons   and  points of  the SCP bundle, on which the symplectic model is built,  consists 
of the identification of  each simple cell  of  the V1 cortex   with the linear filters for  the input function  of the retina defined as follows.  We recall that   a linear filter   on the input energy function $I(z)$
 is a map 
 $$I  \overset{T_W}\longmapsto \int_D W(z)I(z) \vol\ ,\qquad   z=(x,y)\ ,$$
 where we denote  by  $\vol :=  dx\,dy$ and  by $W(z)$   a  {\it  density}   with support $D$  ($W$   is called  {\it receptive
 profile (RP)} or {\it weight} and $D$ {\it receptive field (RF)}). Roughly speaking the linear functional     computes a  sort  of  ``mean value'' of the restriction
$I\vert_D$ of the input function to the receptive field $D \subset R$,  where  each   point   $z$  is counted
with  the  weight    $W(z)$.  The   {\it ``mother'' Gabor    filters} are the two  linear functionals of the above kind in which the  RP $W(z)$  is of either one of the following two forms 
       $$  \g^+_0(z):= \g_0(z)\cos y = e^{-\frac{1}{2}\vert z\vert ^2} \cos y\ ,\qquad  \g^-_0(z):= \g_0(z)\sin y    = e^{-\frac{1}{2}\vert z\vert ^2} \sin y\ ,   $$
These two RPs  conveniently combine into  the   a single  {\it complex} RP
        \beq \label{mothergaborcomplex}  \g_0^{\bC}(z) := \g_0(z) e^{iy} = e^{-\frac{1}{2} \vert z\vert ^2 +iy}  =   \g^+_0(z) + i  \g^-_0(z) \ ,  \eeq
  which is called {\it RP of  the complex mother  Gabor  filter}  $\Gabor_{\gamma_0^\bC} $.  An  {\it even} (resp. {\it odd}) {\it Gabor filter} is   a linear filter $T_W$
  with RP profile  $W(z)$ given by  the first (resp. the second)  of the  following densities
  $$\g^+_g(z) = \Re(\gamma^{\bC}_0(g(z)))\ ,\qquad \g^-_g(z) = \Im(\gamma^{\bC}_0(g(z)))$$
  where  we denote by   $g$  a transformation in  $G =\Sim(\bR^2)$.   The main assumption of the SCP model is the following: 
  {\it  each simple cell of the V1 cortex  works as  one of the above defined Gabor filters, which are in turn in  bijection with the points of  $G =\Sim(\bR^2)$, the latter   
  naturally identified with the total space of the above defined  SCP bundle}.
  \par
\medskip
   \section{Bressloff and Cowan's  spherical model\\ for  the hypercolumns  of    the V1  cortex }\label{sect8}
      Bressloff   and   Cowan \cite{B-C,B-C1,B-C2} proposed   a Riemannian  spherical  model for the hypercolumns, which is  based on two   parameters: the orientation $\theta$  and  the   spatial  frequency $p$. They   assumed  that a  hypercolumn $H$ is a domain in the V1 cortex,     associated  with two  pinwheels $\Sp,\Np$,       corresponding   to the minimal  and   maximal values  $p_-$,  $p_+$ of the spatial frequency.  According to such a  model, the simple  neurons  of the hypercolumn $H$ are  parametrised  by  their orientation $\theta$  and  their {\it normalised  spatial frequency } $\phi$,  given by
       $$\phi =\pi \frac{\log(p/p_-)}{\log(p_+/p_-)}  \ .$$
   The definition of $\phi$  is done in such a way  that   it  runs exactly between     $0 $ and $ \pi$.
For any choice of  the parameters  $\theta$ and $\phi$, the corresponding  simple  neuron  $ n = n(\theta, \phi)$  fires  only  if  a  stimulus occurs  in its RF   with      orientation  $\theta $  and with    normalised  spatial  frequency   $\phi$. \par
     \smallskip
   Bressloff and Cowan proposed to  consider  the parameters $\theta$ and $\phi$   as  spherical coordinates,  where      $\theta \in [0, 2\pi)$ corresponds to  the {\it longitude} and $\phi = \phi' + \pi/2 \in [0, \pi]$  corresponds to  the  {\it polar  angle} or  {\it shifted latitude}  (with  $\phi' = \phi- \pi/2$    {\it latitude} in the usual sense).      The shifted  latitudes of the pinwheels  $\Sp,\Np$  are $0, \pi$,  but the   longitude (=orientation) is not defined for them -- in fact, the pinwheels  are able to detect contours of {\it any}  orientation.  This led Bressloff and Cowan to  identify the hypercolumn $H$  with  the sphere $H_{BC} =S^2$,  equipped  with the  spherical coordinates $\theta$,  $\phi$. The  two pinwheels $\Sp, \Np$ of $H$ are identified with   the south  and the north pole  of the sphere $H_{BC} = S^2$.\par
\par
\smallskip
 The   union  of the RF's  of  the neurons of a  column (resp.  hypercolumn)  is called  {\it the  RF of the column} (resp. {\it of the  hypercolumn)}.  Assume that  the RF of the simple neurons are  small, so that they can be considered just as   points.  We therefore get that  there is   a map  from the sphere  $H_{BC} = S^2$   into  a corresponding region  
 $R_H \subset R$ of the retina (that represents the RF of the hypercolumn  $H$): 
 $$\pi: H_{BC} = S^2 \longrightarrow R_H\ ,\qquad n \longmapsto z(n) = \text{receptive fields of}\  n\ . $$ 
 According to  these ideas , the projection
 $$\pi: H_{BC} =S^2 \setminus \{\Sp, \Np\} \longrightarrow  R_H \setminus \{\text{receptive fields of $\Sp$ and $\Np$}\}$$
 is a (local) diffeomorphism between  the following two objects: (1) the hypercolumn  with the pinwheels $\Sp, \Np$ removed  and (2) the   region $R_H \subset R$ consisting  of the corresponding receptive fields.  This is tantamount
 to assume that:
 \begin{itemize}[leftmargin= 20pt]
 \item[(a)] the simple neurons of a  regular column in  $H_{BC} = S^2  \setminus \{\Sp, \Np\}$    with   equal  orientation $\theta$  (i.e. the collection of the simple neurons of $H$  that  fire for the same    orientation $\theta$) are distinguished one from the other by  their   spatial frequencies $\phi$;
 \item[(b)] the   simple neurons  in (a)  have  different RFs   if they have different   spatial frequencies, and  they are no longer assumed to be functionally equal.
 \end{itemize}
  \par
\medskip
\section{A short  introduction to the  conformal geometry of the sphere}\label{sect8*}
In this subsection, we   briefly recall  a few facts  of  the theory of  conformal transformations of the $2$-sphere  (for more detailed  introductions and   references, see e.g. \cite{K,A-G,P-S,Su}).\par
 \subsection{The Riemannian   spinor  model of  the conformal  sphere} \label{sect51}
Let us  identify  the   sphere $S^2$  with the {\it Riemann  sphere}, i.e. with  the one-point compactification of the complex plane $\bC = \bR^2$
$$S^2 = \wh{\bC} = \bC \cup \{\infty \}\ ,$$
with   the south and north  poles $\Sp$, $\Np$ of $S^2$ identified with   $\Sp=0$ and  $\Np = \infty $, respectively,  and equipped  with the  standard  complex  coordinates
$z$  for    $S^2\setminus \{\Np\}  = \bC$  and  $w \= \frac{1}{z}$ for   $S^2\setminus\{\Sp\} = (\bC \setminus \{0\}) \cup \{\infty\}$. 
  \par
       \smallskip
       The  group  $ \SL_2(\bC)$  acts    on
       $S^2 = \wh \bC$ as   the conformal  group  of  the   linear  fractional    transformations
    $$ z \longmapsto Az = \frac{az +b}{cz +d}, \qquad   a,b,c,d \in \bC,\ \qquad   \det A =  \begin{pmatrix}
a & b \\
c  & d
\end{pmatrix}  =1\ ,$$
      with    kernel $\bZ_2 = \{\pm \text{Id}_{\wh \bC}\}$. 
      Consider  the subgroups of $\SL_2(\bC)$  defined by    
\beq\label{gd}  \begin{split} & \hskip4  cm  \bC^* = \left\{ \begin{pmatrix}
a&  0\\
0& a^{-1}  \end{pmatrix}, \, a\in \bC\setminus\{0\} \right\} \\
& N^+  =  \left\{  A   =
\begin{pmatrix}
1&  b\\
0& 1 \end{pmatrix}\ ,\ b \in \bC\ \right\}\qquad \text{and}\qquad N^-  =  \left\{  A   =
\begin{pmatrix}
1&  0\\
c& 1 \end{pmatrix}\ ,\ c \in \bC\ \right\}\ , 
\end{split}
\eeq
i.e.  the  diagonal subgroup and the   unipotent  subgroups of  upper  and  low  triangular  matrices of $ \SL_2(\bC)$, respectively.
Let us also denote by  
$$B^- \=  N^- {\cdot} \bC^*  \simeq  \CO(2) {\cdot} \bR^2 \ ,\qquad B^+  \= \bC^*  N^+  \simeq\CO(2) {\cdot} \bR^2$$
 the   Borel     subgroups of   $\SL_2(\bC)$ given by 
    the  lower  triangular and the upper  triangular    matrices, respectively.
Notice that for  the  open  and dense   subset   $\cU = \big\{ A =\left( \smallmatrix a & b\\ c & d \endsmallmatrix\right) \ :\ a \neq 0\big\}  \subset \SL_2(\bC)$,  each $A \in \cU$   decomposes into  the product 
\begin{multline*} A = A^-  A^0 A^+\ ,\\
 \text{with}\   A^- = \left(\begin{array}{cc} 1 & 0\\ \frac{c}{a} & 1 \end{array} \right)\in N^{-} \ ,\     A^0  = \left(\begin{array}{cc} a &0 \\ 0 & \frac{1}{a} \end{array} \right)\in \bC^* \ ,  \ A^+ = \left(\begin{array}{cc} 1 & \frac{b}{a}\\ 0 & 1 \end{array} \right)\in N^{+}\ ,\
\end{multline*}
 so that    the open dense subset  $\cU \subset \SL_2(\bC)$  admits the  {\it Gauss decomposition}  
$ \cU =  N^- {\cdot} \bC^* {\cdot} N^+$.\par
  \smallskip
  The   stability  subgroups $(\SL_2(\bC))_{\Sp}, (\SL_2(\bC))_{\Np} \subset \SL_2(\bC)$ of  the   points $\Sp= 0$  and  $\Np= \infty$
are    
$$(\SL_2(\bC))_{\Sp} = B_- \ , \qquad \  (\SL_2(\bC))_{\Np}  = B_+\ \ ,$$
and are  both isomorphic to $\Sim(\bR^2) = \CO(2) {\cdot} \bR^2$.
Thus,  as  a  homogeneous  manifold,   $S^2 = \wh{\bC}$ is   identified  with the   coset  space
$$S^2 =  \SL_2(\bC) /B_{\mp} =   \SL_2(\bC)/ \Sim(\bR^2)\ .$$
\par
\smallskip
\subsection{The  stereographic  projections  of  the   sphere}
 Let us now identify  $S^2 $ with  the standard unit sphere  $S^2 = \{ x^2 + y^2 + z^2 = 1\}  $ in $ \bR^3$.
       The  {\it stereographic  projection $\ster_{\Np}$ of  $S^2$    from  the north pole $\Np$   onto  the  tangent  plane  $\Pi$  at the  south  pole $\Sp$} is the $ (\SL_2(\bC))_{\Np}$-equivariant  conformal mapping
     $$\ster_{\Np}  : S^2 \longrightarrow \Pi := T_{\Sp} S^2$$
 that   sends   each   point $A$ of the punctured sphere  $S^2 \setminus \{\Np\}$  to the point of the  intersection $\overline A$ between the tangent plane $\Pi$ and  the  ray   $\ell_{\Np A} \subset \bR^3$  with origin $\Np$ and  passing through $A$.
 The map $\ster_{\Np}$ naturally defines complex coordinates on the sphere $S^2$  as follows. Consider coordinates $(X, Y, Z)$ so that  that   $S^2 \subset \bR^3$ is defined  by the equation
$$X^2 + Y^2 + (Z-1/2)^2 =1/4$$
so  that $\Sp = (0,0,0)$ in  the coordinates $(X, Y, Z)$. Then  the  tangent   plane
   $ \Pi = T_{\Sp} S^2$  is the     plane $Z=0$ with  induced  coordinates $(X,Y)$.  If we identify   $\Pi $  with    the complex line $\Pi = \bC$ with complex  coordinate  $z = x +i y$, the stereographic  projection  takes the form
  $$ \ster_{\Np} :  S^2 \ni A = (X,Y,Z) \longmapsto z= x+iy :=  \frac{X+iY}{1-Z} $$
  and its inverse map is
   $$ \ster^{-1}_{\Np} : \Pi \ni z= x+iy \longmapsto(X,Y,Z)=\frac{1}{1 + |z|^2} \big(x, y , |z|^2\big). $$
   This expression allows to consider $z = x + i y$ as a holomorphic  coordinate  on   $S^2\setminus \{\Np\} \subset \bR^3$. \par
    A similar definition gives 
    the  {\it stereographic  projection} $\ster_{\Sp}: S^2 \setminus \{\Sp\} \longrightarrow  \Pi' \= T_{\Np} S^2$  from the south pole. It determines a holomorphic coordinate $w$ on $S^2 \setminus \{\Sp\}$, which  satisfies  $w = \frac{1}{z}$ on $S^2 \setminus\{\Np, \Sp\}$. This 
     motivates  the  previous  identification  of $S^2$ with  the Riemann sphere
    $S^2 = \wh {\bC} = \bC \cup \{\infty \}$.\par
   \smallskip
   In  the   holomorphic coordinates $z$ and $w = \frac{1}{z}$,  the  conformal action   of  the  stability   subgroup  $ (\SL_2(\bC))_{\Np}= B_+ \simeq \Sim(\bR^2)$ on $S^2$  becomes   affine:
    $$(\SL_2(\bC))_{\Np} \ni A =
    \begin{pmatrix}
    a&b\\
    0&a^{-1}
    \end{pmatrix}
    : z \longmapsto A z =  a'z +b'\qquad \text{where}\ \ \   a' :=a^2,\ \  b' := ab.$$
    and  the   lower  triangular nilpotent  group  $N^- $   acts   as
    $$
    N^-  \ni C =
    \begin{pmatrix}
    1&0\\
    c& 1
    \end{pmatrix}
    : z \longmapsto C(z) = \frac{z}{1+cz} = z(1 - cz + (cz)^2 - \ldots  ) $$
We remark that  $N^-$ acts as   the group of  linear fractional transformations  which is generated  by the  holomorphic  vector  field  $ z^2 \partial_z$ and
 acts  trivially   on the tangent  plane $\Pi = T_{\Sp} S^2$.
\par
\smallskip
  \subsection{Basics of   conformal geometry of the sphere}\label{fund}
The {\it conformal  structure}  of a Riemannian  manifold $(M, g)$ is    the   class   $[g] = \{ \lambda g\  : \   0< \lambda  \in C^{\infty}(M)\}$ of  all metrics that are  conformally  equivalent to $g$. The  {\it conformal  group   of $(M, [g])$} is  the   group     $ \operatorname{Conf}(M)$  of the   transformations,   which   preserve  $[g]$. The {\it conformal sphere} is the unit sphere $S^2 \subset \bR^3$  equipped with the conformal structure $[g_o]$ of  the standard round metric  $g_o$.  It is known that  $\operatorname{Conf}(S^2)$ has two connected components.  The connected component of the identity  is   
$$\operatorname{Conf}^o(S^2) \simeq \SO^o_{1,3} \simeq \SL_2(\bC)/\bZ_2\ .$$  All  stability subgroups at points $p \in S^2$ are  isomorphic each other and each of them is  isomorphic to $ (\SL_2(\bC))_{\Sp} =   = B_- \simeq \bR^+ {\cdot} (\SO_2 \ltimes \bR^2)   = \Sim(\bR^2)$,   the group of similarities of $\bR^2$. 
\par
The  above defined stereographic  projection
$     \ster_{\Np} :  S^2\setminus  \{\Np\} \longrightarrow  T_{\Sp} S^2 = \bR^2 = \bC$
naturally extends  to a  conformal diffeomorphism between the sphere $S^2$ and the {\it Riemann sphere} 
$$\varphi :  S^2  \longrightarrow   \wh{\bC} = \bC \cup \{\infty\} $$
 mapping $\Np$ to $\infty$.
As we mentioned in \S \ref{sect51},  the group $ \SL_2(\bC)$ acts  transitively on   $\wh{\bC}$ as  the   group of  linear fractional transformations with  kernel $\bZ_2 = \{\pm \Id\}$ and  its  stability  subgroup $(\SL_2(\bC))_{p = \infty}$  at 
  $\infty $      is the   group   $\operatorname{Sim}(\bR^2) =  \{z \mapsto az+b\} $ of  the  similarity transformations of the plane (which, by
Liouville Theorem,    is  also   the  connected group of  conformal transformations of  $\bR^2$).  The diffeomorphism $\varphi$ is $G/\bZ_2$-equivariant and determines an identification between 
 the Riemann  sphere  $\wh{\bC}$  and  the conformal  sphere   $S^2  = G/ \operatorname{Sim}(\bR^2)$.
\par
 \smallskip
 Denote  by  $\cF^o =  (0,  (f_1^o, f_2^o))$   the  standard  frame   of $\bR^2 = \bC$  at the origin.
 An {\it oriented  conformal    frame   at a   point  $z \in \bC =  \bR^2$} is an  orthogonal  frame  $\cF =(z, (f_1,f_2))$  with   the  same   orientation  as    $\cF^o$   and with  the vectors $f_1,   f_2$   of  the  same length.     Any   conformal  frame  $\cF$ at  a   point  $z \in \bR^2=\bC  $   can  be identified with  the  1-jet  $j^1_0(h)$  of the unique  conformal transformations $h  \in  \operatorname{Conf}(\bR^2) = \operatorname{Sim}(\bR^2)$  that  maps the origin   $0$  into  $z = h(0)$ and the frame  $\cF^o$ into $\cF$.
Like in the  cases of the isometry  groups and  bundles of orthonormal frames, the   similarity  group $\operatorname{Sim}(\bR^2)$ acts   simply transitively on the total space of the   bundle    $\mathcal{CF}r(\bR^2) \to \bR^2 $ of all   oriented    conformal  frames  of $\bR^2$ and   there exists a 
$(\operatorname{Sim}(\bR^2))_0$-equivariant  diffeomorphism
$$     \operatorname{Sim} (\bR^2) \ni  h \longmapsto      \cF_h  \=   \bigg(h(0),  (h_* f_1^o, h_*f_2^o)\bigg) \in  \mathcal{CF}r(\bR^2) \ . $$
 This construction   can be   generalised  to the     case of  the conformal sphere as follows.       
We   define   as  {\it second order   conformal frame   at  a point  $p$}   the   $2$-jet  of the form  $j^2_0 (\varphi^{-1}\circ \wh h)$,   where $\varphi^{-1}$ is the inverse stereographic projection $\varphi^{-1} : \bR^2  \to S^2 \setminus\{\Np\}$  (which is also equal to  the inverse of a local conformal coordinate system around  the south pole $\Sp = (0, 0,-1)  \in S^2$)
and  $\wh h$ is  a transformation in $ \operatorname{Conf}^o(\wh \bC)$  such that  $ h(\Sp)   =p$ (so that   $\varphi^{-1}\circ \wh h$ is  the inverse of a local conformal coordinate system  around $p$).   \par
The manifold  $\mathcal{CF}r^{(2)}(S^2)$ is  the   total space of the principal  Cartan bundle of second order conformal frames
 $$   \pi:    \mathcal{CF}r^{(2)}(S^2)   \longrightarrow S^2   $$
 with   the     group $ \operatorname{Sim}(\bR^2)=( \operatorname{Conf}^o (S^2 ))_{\Sp}$,   as  structure group. 
Since the conformal group $ \operatorname{Conf}^o (S^2)$  acts  simply transitively   on     $\mathcal{CF}r^{(2)}(S^2)$ as  a  group  of  automorphisms,   the Cartan   bundle  $\pi:  \mathcal{CF}r^{(2)}(S^2)   \longrightarrow S^2$  of second order   frames  is  identified with the bundle 
$$   \pi : \SL_2(\bC) \longrightarrow     \SL_2(\bC)/\operatorname{Sim}(\bR^2)=      \SL_2(\bC)/B_-  =S^2 = \wh{\bC}  $$
  associated with the  homogeneous space   $S^2 =    \SL_2(\bC)//B_-$ (see \S \ref{sect51} for notation).   
Since the left  translation   defines an  absolute parallelism on the Lie group $G =   \SL_2(\bC)$  (i.e.  a canonical  identification between  all tangents spaces $T_gG$, $g \in G$), the identification  $\mathcal{CF}r^{(2)}(S^2) = \SL_2(\bC)$  gives  a canonical {\it Cartan connection} on the principal bundle $\pi$,  playing a  crucial  role   in   conformal geometry.   In particular,  it  allows to generalise the classical Frenet theory of the curves in $\bR^2$ to the conformal case (see e.g. \cite{Su}) and    determines  the conformal invariants of $S^2$.
 \par
 \medskip
  \section{The conformal  model   and the reduced   model  of a   hypercolumn} \label{THEMODEL}
\subsection{The conformal  model as an extension of  Bressloff-Cowan's  model}
We  now   present our   conformal modification of  Bressloff and Cowan's spherical  model of a hypercolumn. 
 Let us   first recall the following  two crucial points of   the spherical  model:
\begin{itemize}[leftmargin = 20pt]
\item[(a)] A  hypercolumn $H $ is  identified  with  the  sphere $H_{BC} = S^2$ in such a way that  the north and south pole of $S^2$ are identified with    the  two  pinwheels $\Np$ and $\Sp$ of $H$,  where the parameter $\phi$ takes its extremal values.
\item[(b)] The two parameters $(\theta, \phi)$ for the simple neurons of
 $$\wt H \=  H \setminus \{\Np, \Sp\} = S^2 \setminus \{\text{North pole, South pole}\}$$
  can be  taken as  local coordinates for the RF  $R_H$ of the hypercolumn. In other words,  the projection
$$ \pi: \wt H   \longrightarrow  R_H:= \pi(\wt{H}) \subset  R \ ,   $$
  mapping  each    simple cell  in $\wt H$   to  its  RF  (considered as a point of the  retina $R$) is a local diffeomorphism. 
 This in particular means that {\it the  parameters   $(\theta, \phi)$  cannot be considered as internal parameters, but as  local  coordinates  for  a region of the   retina  $R$} (in fact,  the receptive field of the hypercolumn).
 \end{itemize}
 \par
 At this point, we would like to  remark that  the   spherical  model  of Bressloff and Cowan and   the   symplectic model of Sarti, Citti and Petitot deal with two quite different   aspects  of  the structure of the V1 cortex:  
 \begin{itemize}[leftmargin = 15 pt]
 \item[--] The spherical model is focused on  local aspects (the structure of  hypercolumns)  and in such a model the parameters  $(\theta, \phi)$ are not defined   for   pinwheels;    
 \item[--] The  symplectic model concerns  global aspects of the V1 cortex and   give  mathematical representations just for    the pinwheels. 
 \end{itemize}
We now   propose  two variations   of these  models,  the  {\it conformal  model}  and its associated  {\it reduced model},  which  address    local {\it and}  global features of   the 
 V1 cortex. 
\par\smallskip
Let us begin with the first of the two, the {\it conformal model}. We  assume   that the  firing   of  the   simple   cells  of a hypercolumn  depend not only  on coordinates $(\theta, \phi)$ of a domain  $\cU \subset R$ in the  retina $R$ (as it occurs in Bressloff and Cowan's model),  but  also    on  other  internal coordinates, parameterised by the elements of the similarity group. More precisely,  we  mathematically represent a hypercolumn as   a principal  fiber bundle   $\pi : P \to  R_H \subset R $  over  the RF  $R_H$ of the hypercolumn $H$    with structure group $B_- = \operatorname{Sim}(\bR^2)$ (see \S \ref{sect51} for the definition of $B_- \subset \SL_2(\bC)$).  As we discussed in \S \ref{fund},  this bundle can be locally identified with the Cartan bundle of  the second order conformal  frames
$$ \pi : P =  G  \longrightarrow   S^2 =  G/B_-$$
  of the  conformal group  $G = \SO^o_{1,3} =  \SL_2(\bC)/\bZ_2$. 
\par
\smallskip
 The representation is  based on the following correspondence between points of   the bundle and simple cells of the  hypercolumn $H$: {\it We assume that   each point  $p\in S^2$    corresponds  to  a  family  of simple cells, which is parameterised by the points  of the 
fiber  $G_p = \pi^{-1}(p)$,}  i.e.  by the  four dimensional 
stability subgroup $G_p \subset G$.   In particular, the  north and south poles  of the Bressloff and Cowan's model $H_{BC} =  S^2$ are now corresponding  to the family of  simple cells  parameterised by the elements of the two subgroups
$$   \pi^{-1}(\Sp) =G_{\Sp} = B_-\ ,\qquad \ \pi^{-1}(\Np) = G_{\Np} = B_+\ .$$
In  detail,  we assume the  following  correspondence between the simple cells of  the hypercolumn $H$ and the elements of $G = \SL_2(\bC)$.  Consider  
$$z = x + i y = r e^{i \theta'}\ ,\qquad w = u + i v = \frac{1}{z}$$
as  complex coordinates for   the tangent planes  $T_{\Sp} S^2 \simeq \bC$ and $T_{\Np} S^2 \simeq \bC$,  and use them as   (stereographic) coordinates for  $S^2 \setminus \{\Np\}  = \ster^{-1}_{\Np} (T_{\Sp} S^2)$ and 
  $S^2 \setminus \{\Sp\}  = \ster^{-1}_{\Sp}(T_{\Np} S^2)$,  respectively. \par
  Note that   $r = |z|$ and $\theta' = \arg(z)$ are related with  Bressloff and Cowan's coordinates $(\theta, \phi)$ of  $H_{BC} = S^2$ by             
 \begin{equation} \label{4} r =  2 \tan  (\phi/2), \qquad  \theta'=\theta.  \end{equation}
In particular, the spatial frequency $p$ is related with the  modulus  $r = |z|$ of the stereographic coordinate  by 
\begin{equation}r =  2 \tan  (\phi/2) =  2 \tan  \left( \pi \frac{\log(p/p_-)}{\log(p_+/p_-)}\right)\ .\end{equation}
              Let us now denote by 
$$n_{\Sp}  = (n_{\Sp}^+ ,  n_{\Sp}^-)\qquad \text{and}\qquad n_{\Np} = (n_{\Np}^+, n_{\Np}^-)$$
  the  pairs of (even and odd) neurons, which  work  as the even and odd mother Gabor filters, with receptive profiles (RP) given by the real an imaginary parts of 
 $$\g_{\Sp}^{\bC}(z) := \r_{\Sp} (z) e^{iy} = e^{-1/2 |z|^2 +iy}  \ , \qquad \g_{\Np}^{\bC}(w) := \r_{\Np}(w) e^{iv} = e^{-1/2 |w|^2 +iv}  \ , $$
respectively.  Finally, let us denote by   $S^2_- = \{\phi< 0\}$ and  $S^2_+= \{ \phi > 0\}$  the lower and higher hemispheres of $S^2$, respectively.  \par
\smallskip
{\it  We assume   that each  simple  even (resp.  odd) neuron $n_A = (n_A^+, n_A^-)$, corresponding  to an element}  
$A  =\left( \smallmatrix a & b \\ c & d \endsmallmatrix \right)  \in   \pi^{-1}(S^2_-) $, 
{\it is modelled by   the  real (resp. imaginary) part of the  complex Gabor filter with RP }
\begin{equation}
\begin{split} & \g_A^\bC(z) = A^*(\g_S^\bC(z)) := A^*(\g_S^+(z) + i \g_S^-(z) ) =  |cz+d|^4 e^{ -|z'|^2+ 2i y'}\\
&\hskip 1 cm \text{\it where}\ \  z' = x' + i y' =  A^{-1} z =  \frac{d z - b}{ - c z +a}  \ .
\end{split}
\end{equation}
 In other words, 
  the complex Gabor filter  corresponding to the pair  $n_A = (n_A^+, n_A^-)$ is the filter, which is  obtained  by the action  of   the transformation $A$ on  the complex mother Gabor filter  with  RP $\g_{\Sp}^\bC(z)$.\par
  \smallskip
   {\it Similarly, we assume that  the pair of 
neurons $n_{A'} = (n_{A'}^+, n_{A'}^-)$ associated with the elements  $ A'  = \left( \smallmatrix a' & b' \\ c' & d' \endsmallmatrix \right) \in \pi^{-1}(S^2_+) \subset \SL_2(\bC)$  is associated with the complex Gabor filters with   RP } 
$$\g_{A'}^\bC(w) = A'{}^*(\g_{\Sp'}^\bC(w)) := A'{}^*(\g_{\Sp}^+(w) + i \g_{\Sp}^-(w) ) \ .$$
\par
\smallskip
\noindent{\bf Remark.}  This  model   for hypercolumns offers a physiological realisation of Tits'  presentation of the homogeneous space  $S^2 =  G/ B_\mp$.  We recall that, according to Tits' point of view,   the  points  of a  homogeneous  space $G/K$ are identified with the   subgroups that are conjugated  to a  fixed  subgroup  $K$. In our case   $K= B_\mp \simeq  \Sim(\bR^2)$. 
 Tits'  approach  is  very  important  for the extension of the  theory of  homogeneous  spaces  to the  discrete case. We therefore hope that   it  will have  useful consequences  in neurogeometry.  
 \par
\medskip
\subsection{The reduced model}
At this point it is important to  observe  that, physiologically, the hypercolumn $H$  consists of   a  {\it finite}  number  of  cells. Due to this,  
it is not realistic to parameterise them by  a {\it non-compact}
set, i.e. by the full group $\SL_2(\bC)$.  To tackle this problem, we correct  our original assumptions and assume that $H$ is parameterised by  the elements of a  {\it relatively compact subset $K$} of $G = \SL_2(\bC)$ of the form 
$$K = K^- {\cdot} K^0 {\cdot} K^+\ ,$$
where the sets $ K^\d$, $\d \in \{-, 0, +\}$,  are defined by 
$$K^\d =   \{A \in G^\d\, : \,  \|A - I\|^2 =  \Tr((A-I) \overline{(A-I)}^T) < \r^2\}$$ 
for some fixed  constant $\r > 0$. 
We now need the following theorem. 
\begin{theorem}  Let  $\r>0$ be  a constant such that all elements  $A  =  \left(\smallmatrix 1 & 0\\ c & 1 \endsmallmatrix\right)$ of the compact subset $ K^- \subset G^-$ 
satisfy  $  |c |  < \r$. 
For any given $\ve >0$,  
  the  disc  $\D_\frac{\ve}{1 + \r \ve}\subset \bC \simeq \wh \bC \setminus \{\infty\} $ of radius $ \frac{\ve}{1 + \r \ve} $    and center  $0 (\simeq \Sp)$ is such that   for any  $ A = \left(\smallmatrix 1 & 0\\ c & 1 \endsmallmatrix\right)  \in K^-_\r$ and $z \in \D_{\frac{\ve}{1 + \r \ve}}$
  $$ |Az  - z|    < \ve\ .$$
\end{theorem}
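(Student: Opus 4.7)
The plan is to carry out a direct calculation using the explicit formula for the M\"obius action of $A \in N^-$ on $\wh{\bC}$. Recall from Section~\ref{sect8*} that $A = \left(\smallmatrix 1 & 0 \\ c & 1\endsmallmatrix\right)$ acts as the linear fractional transformation
\[
Az = \frac{z}{cz+1},
\]
so the first step is simply to write
\[
Az - z = \frac{z - z(cz+1)}{cz+1} = \frac{-cz^{2}}{cz+1}, \qquad |Az - z| = \frac{|c|\,|z|^{2}}{|cz+1|}.
\]
The whole theorem will then reduce to bounding numerator and denominator under the two hypotheses $|c| < \rho$ and $|z| < r := \varepsilon/(1+\rho\varepsilon)$.

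Next, I would bound the denominator from below. Since $|cz| < \rho r = \rho\varepsilon/(1+\rho\varepsilon) < 1$, the reverse triangle inequality yields
\[
|cz+1| \;\geq\; 1 - |cz| \;>\; 1 - \frac{\rho\varepsilon}{1+\rho\varepsilon} \;=\; \frac{1}{1+\rho\varepsilon}.
\]
Combining with the obvious estimate $|c|\,|z|^{2} < \rho r^{2} = \rho\varepsilon^{2}/(1+\rho\varepsilon)^{2}$, I obtain
\[
|Az - z| \;<\; \frac{\rho\,\varepsilon^{2}/(1+\rho\varepsilon)^{2}}{1/(1+\rho\varepsilon)} \;=\; \frac{\rho\,\varepsilon^{2}}{1+\rho\varepsilon} \;=\; \varepsilon\cdot\frac{\rho\varepsilon}{1+\rho\varepsilon} \;<\; \varepsilon,
\]
where the final inequality holds because $\rho\varepsilon < 1+\rho\varepsilon$. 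This is exactly the claim.

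There is no genuine obstacle here; the statement is a routine computation, and the only nontrivial point is recognising that the specific radius $\varepsilon/(1+\rho\varepsilon)$ is chosen precisely to make the denominator bound $|cz+1| > 1/(1+\rho\varepsilon)$ and the numerator bound $\rho\varepsilon^{2}/(1+\rho\varepsilon)^{2}$ combine into $\rho\varepsilon^{2}/(1+\rho\varepsilon)$, which is automatically $<\varepsilon$. I would present the argument as the three-line computation above, noting that the radius has been calibrated so that $\rho r < 1$ (ensuring the denominator estimate is valid) and so that the final quotient collapses to a quantity manifestly below $\varepsilon$.
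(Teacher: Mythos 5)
Your proof is correct and follows essentially the same route as the paper: write $Az-z=-cz^{2}/(cz+1)$, bound the numerator by $\rho\varepsilon^{2}/(1+\rho\varepsilon)^{2}$ and the denominator below by $1/(1+\rho\varepsilon)$, and observe the quotient is $\varepsilon\cdot\rho\varepsilon/(1+\rho\varepsilon)<\varepsilon$. In fact your write-up is slightly more careful than the paper's one-line computation, which replaces $|cz+1|$ by $|1-|c||z||$ as if it were an equality and asserts the final parenthetical factor equals $1$ when it actually equals $\rho\varepsilon/(1+\rho\varepsilon)<1$; your version fixes both of these minor slips.
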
 
\begin{proof} It suffices to observe that  if $A \in K^-_\r$ and $z \in \D_\frac{\ve}{1 + \r \ve}$, then 
$$  |Az - z| = \left| \frac{z}{c z + 1} - z\right|= \frac{|c| |z]^2}{|1 - |c||z||} < \ve \left( \frac{\r \ve}{(1 + \r \ve)^2}  \frac{1}{1 - \frac{\r \ve}{1 + \r \ve}} = 
\right) = \ve\ .$$
\ \vskip - 30 pt
\end{proof}
This theorem shows that, according to  our conformal model,  on a sufficiently small neighbourhood  $\mathcal U_{\Sp}$  of the south pinwheel $\Sp \simeq  0$, all 
 elements of the compact set  $K^-$ act  essentially as the identity map.  \par
 \smallskip
 This observation suggest to  consider  
 a {\it reduced model},  that is a model according to which  the neurons corresponding to the set  $\mathcal W = \pi^{-1}(\mathcal U_{\Sp})  \subset  K$,  are parameterised only by  the  four-dimensional domain $K^0 {\cdot} K^+$. 
  In other words, {\it  in our reduced model
   the neurons with RF in   a small neighbourhood $\mathcal U_{\Sp} \subset S^2$
   are  parameterised by the elements  of  the   $2$-dimensional fiber  bundle $\pi_S:  K^0 {\cdot} K^+ \longrightarrow \mathcal U_{\Sp}$. }
   \par
   \smallskip
   Let us now identify the bundle $\pi_{\Sp}:  K^0 {\cdot} K^+ \longrightarrow \mathcal U_{\Sp}$  with a relatively compact portion of 
   the bundle with two dimensional fiber 
  $$\pi_{\Sp} : \bC^*  {\cdot} N^+ \longrightarrow N^+ = \bC^* {\cdot} N^+ /\bC^*  \simeq \bC \simeq S^2 \setminus \{\Np\}\ .$$
  We  recall that 
  \begin{itemize}
  \item  the  group  $B^+ = \bC$ acts  on $ N^+ = \bC^* {\cdot} N^+/\bC^*$  by  parallel translations
 $$z \longmapsto z + b\ ;$$ 
  \item the  group of diagonal matrices $\bC^*$ acts on   $N^+ $ by rotations and homotheties, i.e. transformations  of  the form  $z \to   a z$,  
$a = \lambda e^{i \alpha}$. 
\end{itemize}
  This implies that  $\bC^* {\cdot} N^+$ can be identified with the similarity group $\mathrm{Sim}(\bR^2) $ of $\bR^2 = \bC$ and our 
   bundle $\pi_S: \bC^* {\cdot} N^+ \longrightarrow N^+ \simeq \bR^2$ is  exactly  the SCP bundle   $\pi:   \Sim(\bR^2) \to \bR^2$. \\[10pt]
\par
\smallskip
\subsection{The ``globalised'' reduced model}
 A  reduced model  as  above  (so far  considered   only for  small   neighbourhoods of the pinwheels $\Sp$ and $\Np$) can be given    also for   small neighbourhoods of any other two  antipodal points $\Sp', \Np' \in S^2$.   In fact,  given  an antipodal  pair $\Sp', \Np'$,  in which for instance $\Sp'$ is  in the southern hemisphere,   we may always consider:
  \begin{itemize}[itemsep = 2pt]
  \item[--] a rotation  $A\in \SO_2 \subset \SL_2(\bC)$ such that $\Sp'= A(\Sp)$ and   $\Np' = A(\Np)$
  \item[--] a corresponding new system of spherical coordinates $(\phi', \theta')$ (with new parallels and meridians)  determined  by the  poles  $\Sp', \Np'$. 
  \end{itemize}
 Now,    the     reduced model can be almost verbatim  re-defined for any sufficiently small neighbourhood of $\Sp'$ or $\Np'$, provided that one considers   the new latitude and longitude $(\phi', \theta')$. \par
    \smallskip
 The collection of all    reduced models that can be constructed in this way is parameterised by the set of the antipodal pairs $(\Sp', \Np')$ and it   can be considered as a  {\it globalised  (reduced) model} for the hypercolumn $H$.  According to such globalised model, 
 any  simple neuron corresponding to a  sufficiently small region near a point $\Sp'$ works in term of appropriate new coordinates $(\phi', \theta')$, which  are related  
 with  the original Bressloff and Cowan coordinates  $(\phi, \theta)$  by a  rotation $A \in \SO_2$. 
\par 
                \medskip
  \section{Relations   with  Sarti, Citti and  Petitot's  model and    possible  developments} \label{comparison}
   \subsection{A comparison  between the reduced   model and    Sarti, Citti and  Petitot's  symplectic model} \label{10}
In this section we  compare   Sarti, Citti and Petitot's model of the V1 cortex with our reduced conformal model. 
In fact, both  models  concerns  systems  of simple cells.   More precisely,   Sarti, Citti and Petitot's  model   is about   the simple neurons  of the pinwheels of the  V1 cortex,  while our reduced conformal model is about  the simple neurons  of a (portion of a) hypercolumn.   In both models, the  simple neurons are  identified with  corresponding   Gabor filters,   all of them obtained 
from a  fixed  {\it mother Gabor filter}  using  transformations of a group  $G$.   The group $G$ is the same for both models (it is $G = \Sim(\bR^2)$). However, in our reduced model  such a   group is determined through a reduction of a larger group, namely of  $\SL_2(\bC)$. \par
\smallskip
The construction   of  Gabor filters from a single  mother  filter  leads 
to  a parameterisation  of   neurons  by (a relatively compact subset of) the total space $G$  of the SCP bundle 
\begin{multline*} \pi: G=  \Sim(\bR^2) \longrightarrow \mathcal U \  \text{where}\  \  \mathcal U\ \text{is  a retinal region $R \subset \bR^2$\  } \  \text{(in the symplectic model)} 
 \\ \ \text{or  is an appropriate small domain  of} \\
\text{the  receptive field} \   R_H  
   \ \text{of   a  hypercolumn $H$} \text{ ( in the reduced model)}  \ .
    \end{multline*}
The group $G = \Sim(\bR^2) = \bC^* {\cdot}  \bC$ is $4$-dimensional and, following  Tits' realisation of homogeneous spaces,  each   $z \in \mathcal U$ is  identified with the quotient $G/G_z$  with  $G_z \subset G$ stabiliser of $z$. All stabilisers $G_z \simeq \bC^*$, $z \in \mathcal U$,  are isomorphic  to $\bC^* = \bR^2 \setminus \{0\}$ and, choosing  a fixed stabiliser  $G_{z_o}$, the  bundle can be  trivialised into 
$$ \Sim(\bR^2) \simeq \mathcal U \times G_{z_o} =  \mathcal U \times \bC^* \subset \bC \times \bC^*\ .$$
In  both models, the polar coordinates $( \phi,  \theta)$  for  the stabiliser $G_{z_o} = \bC^* =  \{  \phi e^{i \theta}\}$  admit a double interpretation: 
\begin{itemize}
\item[(a)] They can be considered as  fiber coordinates for the (locally trivialised)  bundle $\pi: \Sim(\bC^2) \simeq \mathcal U \times \bC^* \to \mathcal U$ and, for each point $z\in \mathcal U \subset R$, 
they  parameterise all simple neurons with   RF equal to $z$. According to  this interpretation, they are  {\it internal parameters}. 
\item[(b)]  There is a one-to-one correspondence between 
the elements of   $G_{z_o} = \bC^*$  and  the points of  $\mathcal U \setminus \{z_o\} \subset R$. Indeed,  one may just consider the   natural  
action of $G_{z_o} = \bC^*$ on $  \bC$ and use it to determine a  bijection between   $G_{z_o}$ and the  orbit  $G_{z_o} {\cdot} z' $ of  some  $z'    \neq z_o$. For instance if  we assume  $z_o = 0$ and $z' = 1$ we  have  the  (identity) map 
$$\imath: \bC^* \longrightarrow \bC \setminus \{z_o = 0\} = \bC^* \ ,\qquad \phi e^{i \theta}\overset{\imath} \longmapsto \phi e^{i \theta} {\cdot} 1  
\ .$$  This bijection  allows to consider  the pairs  $(\phi, \theta) $ also as (polar) coordinates on  $\cU \setminus \{z_o\} \subset R$, i.e. as  {\it external parameters}. 
\end{itemize}
 \par \smallskip 
It is however important to remark  that   the symplectic and the reduced model refer to  quite different  physiological objects, namely: 
\begin{itemize}[leftmargin = 10 pt]
\item In  Sarti, Citti and Petitot's model,   the  SCP bundle $\pi:  \Sim(\bR^2)  \to \mathcal U$ represents  the  simple neurons 
of the pinwheels of  V1 cortex. The basis $\mathcal U$ of the bundle  represents the region  of the retina $R$,  given by the RF of all such pinwheels.  
\item In our reduced conformal model,   the SCP bundle  $\pi:  \Sim(\bR^2)   \to \mathcal U$   parameterises the simple neurons (not necessarily belonging to a  pinwheel) corresponding  to a small region of a hypercolumn.  
The base $\mathcal U$ represents  the RF of  the simple neurons, which are  closed either of  the  pinwheels $\Sp$  and
$\Np$ of the hypercolumn.
\end{itemize}
\par
\smallskip
\subsection{Reduced conformal models for the V1 cortex and  a modification of   Sarti, Citti and Petitot's  symplectic  model}
 In this concluding section,  we  would like to indicate how our    reduced model of hypercolumns 
  leads to a natural  extension of    Sarti, Citti and Petitot's symplectic model of the V1 cortex. \par
\smallskip 
Given a  hypercolumn $H$, let us decompose  the corresponding Bressloff and Cowan's sphere $H_{BC} = S^2$ as  the union $S^2 = \Nem\cup \Sem$  of  its     (southern and northern)  hemispheres. 
 According to our reduced model, 
 the systems of simple neurons corresponding to these  two hemispheres
 are  represented by the   bundles $\pi_\pm: \Sim(\bR^2) \simeq \bC^* \times S_\pm  \to S_\pm$.  These two sets of neurons  detect  stimuli of  low spatial frequencies and of  high spatial frequencies, respectively. 
In  Sarti, Citti and Petitot's approach, for each hemisphere we should consider just the  neurons  of  the unique pinwheel ($\Sp$ or $\Np$) of the hemisphere. In this way  the  southern hemisphere $S_-$ (resp. northern hemisphere $S_+$) is represented just  by   the fiber   $\pi^{-1}_-(\Sp) =  \bR^+ {\cdot} \SO_2$ (resp.     $\pi^{-1}(\Np) = \bR^+ {\cdot} \SO_2$).  
\par
\smallskip
This idea,  combined with  the  assumption that  both RF  of  the pinwheels  $\Sp, \Np$ are considered as   points of the retina $R $,  
  leads to a representation of  the  simple neurons of the V1 cortex  by two  (not just one) SCP bundles $\pi: \Sim(\bR^2) \to R$:  one is associated with  the pinwheels detecting  stimuli of low spatial frequencies, 
 the other  with  the pinwheels detecting high frequencies stimuli.   Notice   that the experimental results in  \cite{E-P-G-K-S-K}  support the conjecture of the  existence of  two  independent   systems  for  perceptions of spatial frequencies,  one for the  higher   and another for the lower.   This  is consistent with  the above  model.
\par
\smallskip
Following the same line of arguments,    another  model of the V1 cortex can be proposed. Indeed,  representing each hypercolumn $H$  by the fibers  over its two  pinwheels $\Sp$ and  $\Np$
of   the (non-reduced) conformal bundle $p: \SL_2(\bC) \to S^2$,  i.e. by the fibers 
$$p^{-1}(\Sp) =  N^-{\cdot} \bC^*   = \bC^* \ltimes \bC \ ,\qquad \pi^{-1}(\Np) =   \bC^* {\cdot} N^+ = \bC^* \ltimes \bC \ ,$$
we get  that  the V1 cortex might be represented by two copies of  the  trivial bundle 
$$\widehat  p: \mathbb V = (\bC^* \ltimes \bC) \times R  \to R\ .$$
The main difference between the previous and this  second model is that now, for each point $z$ of the retina, there are   {\it four} (not just {\it two})   internal parameters, say $(\phi, \theta, u, v)$,  two of them corresponding to  the subgroup $\bC^*$ and the other two corresponding to the subgroup $\bC$. The  two parameters $(\phi, \theta)$  for $\bC^*$   can be   identified with 
the internal parameters of the two bundles  $\widehat \pi: \Sim(\bR^2)  \to R$ and are   related with  the spatial frequency and orientation   $(\phi, \theta)$ of the Bressloff  and Cowan spherical model  (see   (b) in \S \ref{10}).  
At this moment, we do not know   possible physiological interpretations for the remaining  two  internal parameters $(u,v)$ for the subgroup $\bC$. Maybe  
they are  related   with   the  derivatives of the orientations and the spatial frequencies of  stimuli.  Another possibility   is   that  these parameters  are  used in   higher level   systems of the  visual system.
\par
\medskip

\section*{Declarations}
D. A.   was partially supported  by  the Grant 
Basis-Foundation ``Leader''   N. 22-7-1-34-1.
 Besides this, no other funds, grants, or
support was received.
\par
\bigskip

\font\smallsmc = cmcsc9
\font\smalltt = cmtt8
\font\smallit = cmti8

\font\smc = cmcsc10 at 12 pt
\font\ssmc = cmcsc10 at 11 pt

\font\sixrm=cmr6
\font\eightrm=cmr8
\renewcommand{\sc}{\sixrm}
\font\cmss=cmss10
\font\cmsss=cmss10 at 7pt
\font\cmssl=cmss10 at 12 pt
\font\bssl=cmssdc10 at 16 pt
\font\cmsslll=cmss10 at 14 pt

\vskip 1.5truecm
\hbox{\parindent=0pt\parskip=0pt
\vbox{\baselineskip 9.5 pt \hsize=3.5truein
\obeylines
{\smallsmc
Dmitri V.  Alekseevsky, 
Higher School of Modern Mathematics -- MIPT,
1 Klimentovskiy per., 
Moscow,
Russia                                                                                     
\&
University of Hradec  Kr\'alov\'e,
Faculty of Science, 
Rokitansk\'eho 62, 
500~03 Hradec Kr\'alov\'e,
Czech Republic
%
%
%

}\medskip
{\smallit E-mail}\/: {\smalltt dalekseevsky@iitp.ru}
}
\vbox{\baselineskip 9.5 pt \hsize=3.1truein
\obeylines
{\smallsmc
Andrea Spiro
Scuola di Scienze e Tecnologie
Universit\`a di Camerino
Via Madonna delle Carceri
I-62032 Camerino (Macerata)
Italy
}\medskip
\medskip
{\smallit E-mail}\/: {\smalltt andrea.spiro@unicam.it
}
\ \\
\ \\[5 pt]
}
}

 \end{document}